\newcommand{\Q}{\rm \bf \cal{Q}}
\newcommand{\region}{{\cal R}}
\newcommand{\inD}{I}
\newcommand{\outD}{O}
\newcommand{\opt}{OPT}
\newcommand{\recSep}{\mbox{recSep}}
\begin{document}

\title{\Large On Isolating Points using Disks}

\author{{Matt Gibson}\inst{1} \and
{Gaurav Kanade}\inst{2} \and {Kasturi Varadarajan}\inst{2}}
\institute{Department of Electrical and Computer Engineering\\
The University of Iowa\\
Iowa City, IA 52246\\
Email:matthew-gibson@uiowa.edu
\and
Department of Computer Science\\
The University of Iowa\\
Iowa City, IA 52246\\
Email:gaurav-kanade@uiowa.edu, kvaradar@iowa.uiowa.edu
}
\maketitle

\begin{abstract}
In this paper, we consider the problem of choosing disks (that we can think of as corresponding to wireless sensors) so that given a set of input points in the plane, there exists no path between any
pair of these points that is not intercepted by some disk. We try to achieve this separation using a minimum number of a given set of unit disks. We show that a constant factor approximation
to this problem can be found in polynomial time using a greedy algorithm. To the best of our knowledge we are the first to study this optimization problem.
\end{abstract}

\section{Introduction}
Wireless sensors are being extensively used in applications to provide barriers as a defense mechanism against intruders at important buildings, estates, national borders etc. Monitoring the area of interest by this type of coverage is called \textit{barrier} coverage \cite{KumarLA05}. Such sensors are also being used to detect and track moving objects such as animals in national parks, enemies in a battlefield, forest fires, crop diseases
etc. In such applications it might be prohibitively expensive to attain blanket coverage but sufficient to ensure that the object under consideration cannot travel too far before it is detected. Such a coverage is called \textit{trap} coverage \cite{BalisterZKS09,SankararamanERT09}.

Inspired by such applications, we consider the problem of isolating a set of points by a minimum-size subset of a given subset of unit radius disks. A unit disk crudely models the region sensed by a sensor, and the 
work reported here readily generalizes to disks of arbitrary, different radii.   
\begin{figure}[h]
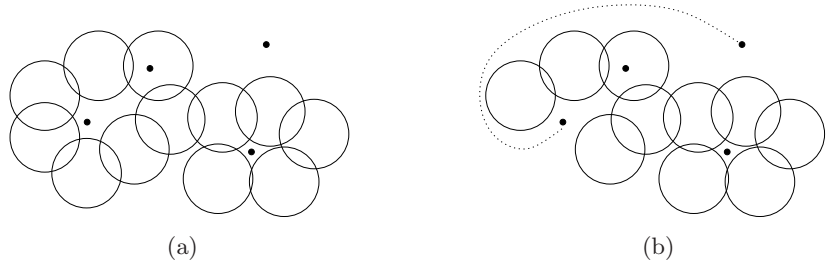

\centering
\begin{tabular}{c@{\hspace{0.1\linewidth}}c}
\input{separate.pstex_t} & 
\input{separate2.pstex_t} \\ 
(a) & (b)
\end{tabular}
\caption{(a) This set of disks separates the points because every path
connecting any two points must intersect a disk.
(b) This set of disks does not separate the points.}
\label{fig:separate}
\end{figure}
\paragraph{Problem Formulation.}
The input to our problem is a set $\inD$ of $n$ unit disks, and a set $P$ of $k$ points such that $\inD$ 
{\em separates} $P$, that is, for any two points $p,q \in P$, every path between $p$ and $q$ intersects at least one disk in $\inD$. The goal is to find a minimum cardinality subset of $\inD$ that separates $P$. See Figure \ref{fig:separate} for an illustration of this notion of separation.

There has been a lot of recent interest on geometric variants of well-known NP-hard combinatorial optimization problems, and our work should be seen in this context. For several variants of the geometric set cover problem, for example, approximation algorithms have been designed \cite{ClarksonV05,AronovES10,MustafaR09} that improve upon the best guarantees
for the combinatorial set cover problem. For the problem of covering points by the smallest subset of a 
given set of unit disks, we have approximation algorithms that guarantee an $O(1)$ approximation and even a
PTAS \cite{bg95,MustafaR09}. These results hold even for disks of arbitrary radii. Our problem can be viewed as a set cover problem where the elements that need to be covered are not points, but paths. However, known results only 
imply a trivial $O(n)$ approximation when viewed through this set cover lens. 

Another example of a problem that has received such attention is the {\em independent set} problem. For many geometric variants \cite{ChalermsookC09,ChanH09,FP}, approximation ratios that are better than that for the combinatorial case are known.

Our problem is similar to the node multi-terminal cut problem in graphs \cite{GargVY04}. Here, we are given a graph
$G = (V,E)$ with costs on the vertices and a subset $U \subseteq V$ of $k$ vertices, and our goal is to 
compute a minimum cost subset of vertices whose removal disconnects every pair of vertices in $U$. This problem admits a poly-time algorithm that guarantees an $O(1)$ approximation. We note however that the problem we
consider does not seem to be a special case of the multi-terminal cut problem.             

\paragraph{Contribution and Related Work.}
Our main result is a polynomial time algorithm that guarantees an $O(1)$ approximation for the problem. To the best of our knowledge, this is the first non-trivial approximation algorithm for this problem. Our algorithm is simple and combinatorial and is in fact a greedy algorithm. We first present an $O(1)$ approximation for the following two-point separation problem. We are given a set of unit disks $G$, and 
two points $s$ and $t$, and we wish to find the smallest  subset $B \subseteq G$ so that $B$ separates    $s$ and $t$. 

Our greedy algorithm to the overall problem applies the two-point separation algorithm to find the 
cheapest subset $B$ of $I$ that separates some pair of points in $P$. Suppose that $P$ is partitioned
into sets $P_1,P_2,\ldots,P_{\tau}$ where each $P_i$ is the subset of points in the same ``face'' with 
respect to $B$. The algorithm then recursively finds a separator for each of the $P_i$, and returns
the union of these and $B$. 

The analysis to show that this algorithm has the $O(1)$ approximation guarantee relies on the combinatorial complexity of the boundary of the union of disks. It uses a subtle and global argument to bound
the total size of all the separators $B$ computed in each of the recursive calls.\footnote{In the
earlier version of this paper, a similar algorithm was analyzed in a more ``local'' fashion. The
basic observation was that the very first separator $B$ that is computed has size $O(|\opt|/k)$,
where $\opt$ is the optimal solution for the problem. Subsequent separators computed in the recursive
calls may be more expensive, but it was shown that the overall size is $O((\log k)\cdot |\opt|)$. In contrast,
the present analysis does not try to bound the size of the individual separators, but just the sum
of their sizes. As a consequence, the analysis also turns out to be technically simpler.}

Our approximation algorithm for the two-point separation problem, which is a subroutine we use in the overall algorithm, is similar to fast algorithms for
finding minimum $s$-$t$ cuts in undirected planar graphs, see for example \cite{Reif83}.  Our overall greedy algorithm has some resemblance to the algorithm of Erickson and Har-Peled \cite{EricksonH04} employed in the context of approximating the minimum cut graph of a polyhedral manifold. The details of the our algorithm and the analysis, however, are quite different from these papers since we do not have an embedded graph but rather a system of unit disks.
Sankararaman et al.
\cite{SankararamanERT09} investigate a notion of coverage which they call
{\em weak coverage}. Given a region
$\region$ of interest (which they take to be a square in the plane) and a set
$I$ of unit disks (sensors), the region is said to be $k$-weakly covered if each connected
component of $\region - \bigcup_{d \in I} d$ has diameter at most $k$. 
They consider the situation when a
given set  $I$ of unit disks {\em completely} covers $\region$, and address the problem of partitioning $I$ into  as many subsets as possible so that $\region$ is
$k$-weakly covered by every subset. Their work differs in flavor from ours mainly due to the assumption that $I$ completely covers $\region$.


\paragraph{Organization.} In Section \ref{sec:prelims}, we discuss standard notions we require, and then 
reduce our problem to the case where none of the points in input $P$ are contained in any of the input disks.
In Section \ref{sec:two-point}, we present our approximation algorithm for separating two points. 
In Section \ref{sec:improved}, we
describe our main result, the constant factor approximation algorithm for separating $P$.
We conclude in Section \ref{sec:conclusions} with some remarks.

\section{Preliminaries}
\label{sec:prelims}
We will refer to the standard notions of vertices, edges, and faces in arrangements of circles \cite{AgarwalS98}. In 
particular, for a set $R$ of $m$ disks, we are interested in the faces in the complement of the union of
the disks in $R$. These are the connected components of the set $\Re^2 - \bigcup_{d \in R} d$. We also
need the combinatorial result that the number of these faces is $O(m)$. Furthermore, the total 
number of vertices and edges on the boundaries of all these faces, that is, the combinatorial complexity of the boundary of the union of disks in $R$, is $O(m)$ \cite{AgarwalS98}. We make standard general position assumptions 
about the input set $\inD$ of disks in this article. This helps simplify the exposition and is without 
loss of generality.

\begin{lemma}
\label{claim:many}
Let $R$ be a set of disks in the plane, and $Q$ a set of points so that (a) no point from $Q$ is contained in any disk from $R$, and (b) no face in the complement of the union of the disks in $R$ contains more than 
one point of $Q$. Then $|R| = \Omega(|Q|)$.
\end{lemma}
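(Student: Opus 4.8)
The plan is to bound $|Q|$ from above by the number of faces in the complement of the union of the disks in $R$, and then invoke the linear bound on the number of such faces recalled just before the statement. Write $m = |R|$, and let $\F$ denote the collection of faces, that is, the connected components of $\Re^2 - \bigcup_{d \in R} d$.

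First I would use hypothesis (a): since no point of $Q$ lies in any disk, every point $q \in Q$ lies in the complement of the union, hence in exactly one face of $\F$. This gives a well-defined map $\phi \colon Q \to \F$ sending each point to the face that contains it. Hypothesis (b) says that no face of $\F$ contains two points of $Q$, which is precisely the statement that $\phi$ is injective. Therefore $|Q| \le |\F|$.

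The second step is purely combinatorial and is where the cited arrangement result does the work: the number of faces in the complement of a union of $m$ disks is $O(m)$, so $|\F| \le c\,m$ for an absolute constant $c$ (for $m \ge 1$). Combining this with the injectivity bound $|Q| \le |\F|$ yields $|Q| \le c\,|R|$, that is, $|R| = \Omega(|Q|)$, as claimed.

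There is no genuinely hard step here: once one recognizes the faces of the complement as the right objects to count, the proof is just a pigeonhole/injectivity observation feeding into the known $O(m)$ bound on the number of faces. The only point deserving care is the degenerate situation $R = \emptyset$, where the complement is a single (unbounded) face and $|Q| \le 1$; keeping the additive term for the unbounded face gives the slightly more careful bound $|Q| \le c\,|R| + 1$, which is $\Omega(|Q|)$ in every nondegenerate instance and is the form I would actually state and use.
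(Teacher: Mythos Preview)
Your proof is correct and follows exactly the same approach as the paper: the paper's proof is the single line ``The number of faces in the complement of the union of the disks in $R$ is $O(|R|)$,'' and your argument simply spells out the implicit pigeonhole step that links $|Q|$ to the face count via hypotheses (a) and (b). Your discussion of the $R=\emptyset$ edge case is a reasonable addition but not something the paper addresses.
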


\begin{proof}
The number of faces in the in the complement of the union of the disks in $R$ is $O(|R|)$.
\qed
\end{proof}

\paragraph{Covering vs. Separating.}
The input to our problem is a set $\inD$ of $n$ unit disks, and $P$ a set of $k$ points such that $\inD$ separates $P$. Let $P_c \subseteq P$ denote those points contained in some disk in $I$; and $P_s$ denote the 
remaining points. We compute an $\alpha$-approximation to the smallest subset of $\inD$ that covers 
$P_c$ using a traditional set-cover algorithm; there are several poly-time algorithms that guarantee that
$\alpha = O(1)$. We compute a $\beta$-approximation to the smallest subset of $\inD$ that separates
$P_s$, using the algorithm developed in the rest of this article. We argue below that the combination 
of the two solutions is an $O(\alpha + \beta)$ approximation to the smallest subset of $\inD$ that separates
$P$.

Let $\opt \subseteq \inD$ denote an optimal subset that separates $P$. Suppose that $\opt$ covers
$k_1$ of the points in $P_c$ and let $k_2 = |P_c| - k_1$. By Lemma \ref{claim:many}, $|\opt| =
\Omega(k_2)$.

Now, by picking one disk to cover each of the $k_2$ points of $P_c$ not covered by $\opt$, we see 
that there is a cover of $P_c$ of size at most $|\opt| + k_2 = O(|\opt|)$. Thus, our $\alpha$-approximation
has size $O(\alpha) \cdot |\opt|$. Since $\opt$ also separates $P_s$, our $\beta$-approximation has size 
$O(\beta) \cdot |\opt|$. Thus the combined solution has size $O(\alpha + \beta) \cdot |\opt|$.

In the rest of the article, we abuse notation and assume that no point in the input set $P$ is contained
in any disk in $\inD$, and describe a poly-time algorithm that computes an  $O(1)$-approximation to 
the optimal subset of $\inD$ that separates $P$. 

\section{Separating Two Points}
\label{sec:two-point}

Let $s$ and $t$ be two points in the plane, and $G$  a set of
disks such that no disk in $G$ contains either $s$ or $t$, but $G$ separates $s$ and $t$. See Figure \ref{fig:fig1}.
Our goal is to find the smallest cardinality subset $B$ of $G$ that separates
$s$ and $t$. We describe below a polynomial time algorithm that returns a constant factor
approximation to this problem.

\begin{figure}[h]
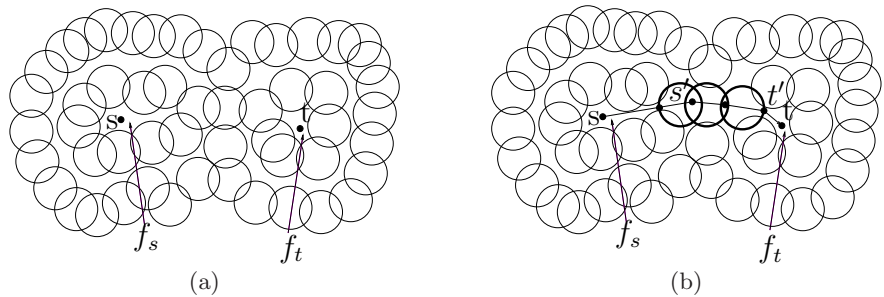

\centering
\begin{tabular}{c@{\hspace{0.1\linewidth}}c}
\input{fig111.pstex_t} & 
\input{fig112.pstex_t} \\ 
(a) & (b)
\end{tabular}
\caption{(a) The figure shows faces $f_s$ and $f_t$
(b) This figure shows the sequence of disks in $\sigma$ (their boundaries are bold) and the path $\pi$.}
\label{fig:fig1}
\end{figure}


Without loss of generality, we may assume that the intersection graph of $G$ is connected. 
(Otherwise, we apply the algorithm to each connected component for which the disks in the component
separate $s$ and $t$. We return the best solution obtained.) Let $f_s$ and $f_t$ denote the faces
containing $s$ and $t$, respectively, in the arrangement of $G$. We augment the intersection
graph of $G$ with vertices corresponding to $s$ and $t$, and add an edge from $s$ to each disk
that contributes an edge to the boundary of the face $f_s$, and an edge from $t$ to each disk that 
contributes an edge to the boundary of the face $f_t$. We assign a cost of $0$ to $s$, $t$, and a cost of $1$ to each disk in $G$. We then find the shortest path from $s$ to 
$t$ in this graph, where the length of a path is the number of the
{\em vertices} on it that correspond to disks in $G$. Let $\sigma$ denote the sequence of disks on this shortest path. Note that any two disks that are not consecutive in
$\sigma$ do not intersect.

Using $\sigma$, we compute a path $\pi$ in the plane, as described below, from $s$ to $t$ so that (a) there are points $s'$ and
$t'$ on $\pi$ so that the portion of $\pi$ from $s$ to $s'$ is in $f_s$, and the portion from $t'$ to
$t$ is in $f_t$; (b) every point on $\pi$ from $s'$ to $t'$ is contained in some disk from $\sigma$; (c) the intersection of $\pi$ with each disk in $\sigma$ is connected. See Figure \ref{fig:fig1}.

Suppose that the sequence of disks in $\sigma$ is $d_1, \ldots, d_{|\sigma|}$. Let $s'$ (resp. $t'$)
be a point in $d_1$ (resp. $d_{\sigma}$) that lies on the boundary of $f_s$ (resp. $f_t$). For $1
\leq i \leq |\sigma| - 1$, choose $x_i$ to denote an arbitrary point in the intersection of $d_i$
and $d_{i+1}$. The path $\pi$ is constructed as follows: Take an arbitrary path from $s$ to $s'$ that
lies within $f_s$, followed by the  line segments $\overline{s' x_1}, \overline{x_1 x_2}, \ldots,
\overline{x_{|\sigma| -2} x_{|\sigma| - 1}}, \overline{x_{|\sigma|-1}t'}$, followed by an arbitrary path from
$t'$ to $t$ that lies within $f_t$.

Properties (a) and (b) hold for $\pi$ by construction. Property (c) is seen to follow from the fact that disks that are not consecutive in $\sigma$ do not overlap.
   
Notice that $\pi$ ``cuts'' each disk in $\sigma$ into two pieces. (Formally, the removal of $\pi$ from
any disk in $\sigma$ yields two connected sets.) The path $\pi$ may also intersect
other disks and cut them into two or more pieces, and we refer to these pieces as disk pieces. For a
disk that $\pi$ does not intersect, there is only one disk piece, which is the disk itself. 


We consider the intersection graph $H$ of the  disk pieces that come from disks in $G$. Observe that 
a disk piece does not have points on $\pi$, since $\pi$ is removed; so two disk pieces
intersecting  means there is a point outside $\pi$ that lies in both of them. In this graph, each disk piece has a cost of $1$.

 \begin{figure}[h]
 \centering
 \input{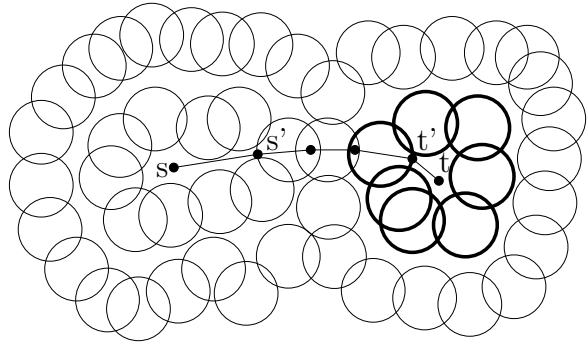} 
 \caption{This figure continues with the example of Figure \ref{fig:fig1}. The disks with bold boundary are the set $D$ computed by our algorithm. The only disk from $G$ with bold boundary has two disk pieces, and the shortest path between them in graph $H$ yields $D$.}
 \label{fig:fig2}
 \end{figure}

In this graph $H$, we compute, for each disk $d \in \sigma$, the shortest path between the two pieces 
corresponding to $d$. Suppose $d' \in \sigma$ yields the overall shortest path $\sigma'$; let $D$ denote 
the set of disks that contribute a disk piece to this shortest path. Our algorithm returns $D$
as its computed solution. See Figure \ref{fig:fig2}.

We note that $D$ separates $s$ and $t$ -- in particular, the union of the disk pieces in $\sigma'$ and the
set $\pi \cap d'$ contains a cycle in the plane that intersects the path $\pi$ between $s$ and $t$ 
exactly once.

\subsection{Bounding the Size of the Output}

Let $B^*$ denote the smallest subset of $G$ that separates $s$ and $t$. We will show that
$|D| = O(|B^*|)$. Let $f^*$ denote the face containing $s$ in the arrangement of $B^*$. Due to the optimality of $B^*$, we may
 assume that the boundary of $f^*$ has only one component. Let $a$ (resp. $b$) denote the first (resp. last) point on path $\pi$ where $\pi$ leaves $f^*$. It is possible that $a = b$. We find a minimum cardinality contiguous subsequence $\overline{\sigma}$ of $\sigma$ that contains the subpath of $\pi$ from $a$ to $b$; let $d_a$ and $d_b$ denote the first and last
disks in $\overline{\sigma}$. See Figure \ref{fig:fig4}.

\begin{figure}[h]
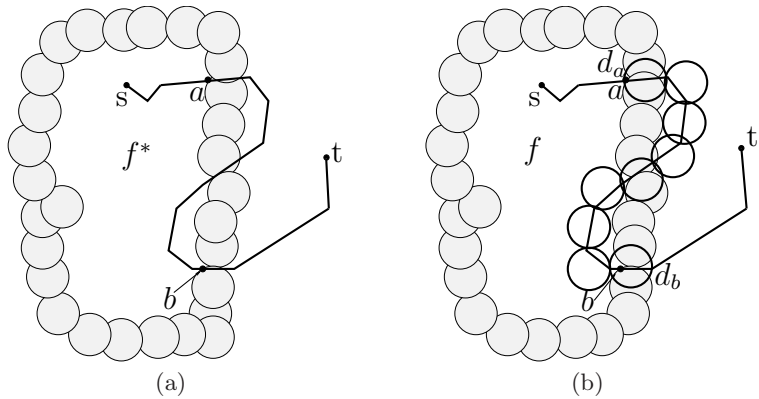

\centering
\begin{tabular}{c@{\hspace{0.1\linewidth}}c}
\input{fignew42.pstex_t} & 
\input{fignew421.pstex_t} \\ 
(a) & (b)
\end{tabular}
\caption{The shaded disks are in $B^*$.(a) This figure shows points $a$ and $b$ where $\pi$ leaves $f^*$ for the first and last time, respectively. 
(b) This figure shows the face $f$ containing $s$ in the arrangement with $ B^* \cup \overline{\sigma} $}
\label{fig:fig4}
\end{figure}

We claim that $|\overline{\sigma}| \leq |B^*| + 2$; if this inequality does not hold, then we
obtain a contradiction to the optimality of $\sigma$ by replacing the disks in the $\overline{\sigma} \setminus \{d_a, d_b\}$ by $B^*$.

Consider the face $f$ containing $s$ in the arrangement with $B^* \cup \overline{\sigma}$.
Each edge that bounds this face comes from a single disk piece, except for one edge corresponding to $d_a$ that may come from two disk pieces. (This follows from the fact that the portion of $\pi$ between $a$ and $b$ 
is covered by the disks in $\overline{\sigma}$.) These disk pieces induce a path in $H$ in between the two pieces from $d_a$, and their cost therefore upper bounds the cost of $D$. We may bound the cost of these
disk pieces by the number of edges on the boundary of $f$ (with respect to $B^* \cup \overline{\sigma}$). The number of such edges is $O(|B^*|+|\overline{\sigma}|)=O(|B^*|)$.

\begin{theorem}
\label{thm:two-point}
Let $s$ and $t$ be two points in the plane, and $G$  a set of
disks such that no disk in $G$ contains either $s$ or $t$, but $G$ separates $s$ and $t$.
There is a polynomial time algorithm that takes such  $G$, $s$, and $t$ as input, and outputs
a subset $B \subseteq G$ that separates $s$ and $t$; the size of $B$ is at most 
a multiplicative constant of the size of the smallest subset $B^* \subseteq G$ that 
separates $s$ and $t$.
\end{theorem}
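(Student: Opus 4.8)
The plan is to establish the three assertions bundled in the statement: that the algorithm runs in polynomial time, that its output $B = D$ separates $s$ and $t$, and that $|D| = O(|B^*|)$. The first two are essentially immediate from the construction. Each step of the algorithm --- computing the shortest path $\sigma$ in the augmented intersection graph, assembling the polygonal path $\pi$, forming the disk pieces and their intersection graph $H$, and running a shortest-path computation in $H$ for each disk of $\sigma$ --- manipulates objects of size polynomial in $n$, so the running time is polynomial. For correctness I would invoke the cycle observation already noted: the union of the disk pieces along $\sigma'$ together with $\pi \cap d'$ contains a closed curve that crosses $\pi$ exactly once between $s$ and $t$; since $s$ and $t$ lie on $\pi$ and a curve meeting $\pi$ an odd number of times separates its two endpoints, $D$ indeed separates $s$ and $t$.

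The substance of the proof is the size bound, and here I would assemble the two ingredients developed just before the statement. Fix an optimal separator $B^*$ and let $f^*$ be the face of its arrangement containing $s$, which by optimality has a single boundary component. Let $a$ and $b$ be the first and last points at which $\pi$ exits $f^*$, and let $\overline{\sigma}$ be the shortest contiguous subsequence of $\sigma$ whose disks cover the subpath of $\pi$ from $a$ to $b$, with endpoints $d_a, d_b$. The first ingredient is the bound $|\overline{\sigma}| \le |B^*| + 2$: were it to fail, replacing the interior disks $\overline{\sigma} \setminus \{d_a, d_b\}$ by $B^*$ would yield a strictly shorter $s$--$t$ walk in the augmented graph, contradicting the optimality of $\sigma$. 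The second ingredient considers the face $f$ containing $s$ in the arrangement of $B^* \cup \overline{\sigma}$. Because the part of $\pi$ between $a$ and $b$ is covered by $\overline{\sigma}$, every edge on the boundary of $f$ is contributed by a single disk piece, with the lone exception of one edge of $d_a$ that may come from two pieces; consequently the disk pieces bounding $f$ form a path in $H$ joining the two pieces of $d_a$. The cost of this path is at most the number of edges of $f$, which is $O(|B^*| + |\overline{\sigma}|)$. Since our algorithm takes, over all $d \in \sigma$, the best shortest path in $H$ between the two pieces of $d$, we obtain $|D| = O(|B^*| + |\overline{\sigma}|) = O(|B^*|)$ after substituting the first ingredient.

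The step I expect to be the main obstacle is verifying that the boundary edges of $f$ really do assemble into a single connected path in $H$ between the two pieces of $d_a$, rather than merely a loose collection of disk pieces. This needs the fact that $\pi$ cuts $d_a$ into exactly these two pieces near $a$ and $b$, and that traversing the boundary of $f$ encounters consecutive disk pieces that overlap off $\pi$ (hence are adjacent in $H$); one must also confirm that this traversal does not cross $\pi$, so that it stays within the disk-piece intersection graph. A secondary point requiring care is the exchange argument for $|\overline{\sigma}| \le |B^*| + 2$: one must check that splicing $B^*$ in place of the interior of $\overline{\sigma}$ yields a walk in the augmented graph respecting the adjacency rules (an edge from $s$ or $t$ only to disks bounding $f_s$ or $f_t$), so that it is a legitimate competitor to $\sigma$. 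Once these two structural facts are in hand, the theorem follows by combining the polynomial-time implementation, the cycle-based correctness argument, and the chain $|D| = O(|B^*| + |\overline{\sigma}|) = O(|B^*|)$.
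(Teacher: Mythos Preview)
Your proposal is correct and follows essentially the same approach as the paper: the polynomial-time and correctness claims are handled exactly as stated, and the size bound is obtained via the same two ingredients --- the exchange argument giving $|\overline{\sigma}| \le |B^*| + 2$, and the boundary analysis of the face $f$ in the arrangement of $B^* \cup \overline{\sigma}$ yielding a path in $H$ between the two pieces of $d_a$ of cost $O(|B^*| + |\overline{\sigma}|)$. The two structural subtleties you flag (that the boundary edges of $f$ assemble into a genuine path in $H$, and that the spliced walk in the exchange argument is a legitimate competitor in the augmented graph) are indeed the points the paper's own exposition leaves implicit.
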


\section{Separating Multiple Points}
\label{sec:improved}
We now present a polynomial time algorithm that yields an $O(1)$ approximation to the problem of finding a minimum
subset of $\inD$ that separates the set $P$ of points. The algorithm is obtained by calling $\recSep(P)$, where
$\recSep(Q)$, for any $Q \subseteq P$ is the following recursive procedure:

\begin{enumerate}
\item If $|Q| \leq 1$, return $\emptyset$.

\item For every pair of points $s,t \in Q$, invoke the algorithm
of Theorem \ref{thm:two-point} (with $G \leftarrow \inD$) to 
find a subset $B_{s,t} \subseteq \inD$ such that $B_{s,t}$ separates $s$ and $t$.

\item Let $B$ denote the minimum size subset $B_{s,t}$ over all pairs $s$ and $t$ considered.

\item Consider the partition of $Q$ into subsets so that each subset corresponds to points
in the same face (with respect to $B$). Suppose $Q_1, \ldots, Q_{\tau}$ are the subsets in
this partition. Note that $\tau \geq 2$, since $B$ separates some pair of points in $Q$.

\item Return $B \cup \bigcup_{j=1}^{\tau} \recSep(Q_j)$.
\end{enumerate}

Clearly, $\recSep(P)$ yields a separator for $P$. To bound the size of this separator, let us define a set $\Q$ that contains as its
element any $Q \subseteq P$ such that $|Q| \geq 2$ and $\recSep(Q)$ is called somewhere within the call to $\recSep(P)$. For any
$Q \in \Q$, define $B_Q$ to be the set $B$ that is computed in the body of the call to $\recSep(Q)$. Notice that
$\recSep(P)$ returns $\cup_{Q \in \Q} B_Q$.

Now we ``charge'' each such $B_Q$ to an arbitrary  point within $p_Q \in Q$ in such a way that no point in $P$ is charged more than once.
A moment's thought reveals that this is indeed possible. (In a tree where each interval node has degree at least $2$, the number of
leaves is greater than the number of internal nodes.) 

Let $\opt$ denote the optimal separator for $P$ and let $F_Q \subseteq \opt$ denote the disks that contribute to the boundary of
the face (in the arrangement of $\opt$) containing $p_Q$. We claim that $|B_Q| = O(|F_Q|)$; indeed $F_Q$ separates $p_Q \in Q$ from
any point in $P$, and thus any point in $Q$. Thus for any $t \in Q \setminus \{p_Q\}$, we have $|B_Q| \leq B_{p_Q,t} = O(|F_Q|)$. 

We thus have 
\[ \bigcup_{Q \in \Q} |B_Q| \leq \sum_{Q \in \Q} O(|F_Q|) = O(|\opt|), \]
where the last equality follows from union complexity.

We have derived the main result of this paper:

\begin{theorem}
\label{thm: main}
Let $\inD$ be a set of $n$ unit disks and $P$ a set of $k$ points such that $\inD$ separates $P$. There is a polynomial time algorithm that takes as input such $\inD$ and $P$, and returns a subset 
$\outD \subseteq \inD$ of disks that also separates $P$, with the guarantee that  $|\outD|$ is within
a multiplicative $O(1)$ of the smallest subset of $\inD$ that separates $P$.
\end{theorem}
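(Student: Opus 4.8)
The plan is to verify three properties of the procedure $\recSep(P)$: that it returns a valid separator, that it runs in polynomial time, and that its output has size $O(|\opt|)$. The first two are routine. For correctness I would induct on $|Q|$: the set $B$ computed in Step~3 separates some pair of $Q$, so the induced partition $Q_1,\dots,Q_\tau$ has $\tau\ge 2$ parts, each strictly smaller; by induction each $\recSep(Q_j)$ separates $Q_j$, and since $B$ already separates any two points lying in different faces of its arrangement, the union $B\cup\bigcup_j \recSep(Q_j)$ separates all of $Q$. For the running time I would observe that every internal node of the recursion tree has at least two children (the parts are nonempty, since a face in the partition must contain a point), while every leaf is a singleton; hence the tree has at most $k$ leaves and $O(k)$ nodes total, and each node performs $O(k^2)$ invocations of the polynomial-time two-point algorithm of Theorem~\ref{thm:two-point} with $G\leftarrow\inD$, so the whole computation is polynomial. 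Recall also that, by the covering-versus-separating reduction of Section~\ref{sec:prelims}, it suffices to treat the case in which no point of $P$ lies inside a disk, which is exactly the setting $\recSep$ operates in.

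The heart of the matter is the size bound, which I would establish through the charging scheme already indicated. First I fix the injective charging $Q\mapsto p_Q\in Q$ guaranteed by the degree-$\ge 2$ property of the recursion tree, so that distinct $Q\in\Q$ receive distinct points of $P$. The goal is then $\sum_{Q\in\Q}|B_Q| = O(|\opt|)$, and the key local estimate is $|B_Q| = O(|F_Q|)$, where $F_Q\subseteq\opt$ is the set of disks contributing to the boundary of the face of the arrangement of $\opt$ that contains $p_Q$.

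To prove this local estimate I would show that $F_Q$ is by itself a valid two-point separator for $p_Q$ and any other $t\in Q$: any path from $p_Q$ to $t$ must leave the face of $\opt$ containing $p_Q$, and in doing so it crosses the boundary of that face, which is composed entirely of arcs of disks in $F_Q$; hence the path meets a disk of $F_Q$. Thus $F_Q\subseteq\inD$ separates $p_Q$ and $t$, so the optimal two-point separator for this pair has size at most $|F_Q|$, and Theorem~\ref{thm:two-point} yields $|B_{p_Q,t}| = O(|F_Q|)$. Since $B_Q$ is the minimum-size $B_{s,t}$ over all pairs of $Q$, we get $|B_Q|\le |B_{p_Q,t}| = O(|F_Q|)$.

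Finally I would sum over $Q\in\Q$. Because $\opt$ separates $P$, no face of its arrangement contains two points of $P$, so the distinct charged points $p_Q$ lie in distinct faces; consequently each $|F_Q|$ bounds the edges on a distinct face, and $\sum_{Q}|F_Q|$ is at most the total combinatorial complexity of the boundary of the union of the disks in $\opt$, which is $O(|\opt|)$. Combining the estimates gives $\sum_{Q\in\Q}|B_Q| = O(|\opt|)$, and since $\recSep(P)$ returns $\bigcup_{Q\in\Q}B_Q$, the output has size $O(|\opt|)$, proving the theorem. The step I expect to be the main obstacle, and would write most carefully, is the claim that $F_Q$ alone separates $p_Q$ from every other point of $P$: this is precisely what lets the purely local two-point optima be compared against a single global budget (the face-boundary complexity of $\opt$), and the crossing argument above is what makes the comparison valid.
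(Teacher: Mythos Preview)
Your proposal is correct and follows essentially the same approach as the paper: the same recursive algorithm, the same injective charging $Q\mapsto p_Q$, the same local estimate $|B_Q|=O(|F_Q|)$ via Theorem~\ref{thm:two-point}, and the same appeal to union complexity to bound $\sum_Q |F_Q|$. You supply a bit more detail on correctness and running time than the paper does, but the substance of the approximation analysis is identical.
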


\section{Conclusions}
\label{sec:conclusions}

We have a presented an $O(1)$-approximation algorithm for finding the minimum subset of a given set
$\inD$ of disks that separates a given set of points $P$. One way to understand our contribution is
as follows. Suppose we had at our disposal an efficient algorithm that optimally separates a single
point $p \in P$ from every other point in $P$. Then applying this algorithm for each point in $P$,
we get a separator for $P$. That the size of this separator is within $O(1)$ of the optimal is
an easy consequence of union complexity. However, we only have at our disposal an efficient algorithm
for a weaker task: that of approximately separating two given points in $P$. What we have shown 
is that even this suffices for the task of obtaining an $O(1)$ approximation to the overall problem.

It is easy to see that our algorithm and the approximation guarantee generalize, for example, to
the case when the disks have arbitrary and different radii.

\paragraph{Acknowledgements.} We thank Alon Efrat for discussions that led to the formulation of the problem, and Sariel Har-Peled for discussions that led to the algorithm described here. 
\bibliographystyle{plain}
\bibliography{barcov}

\begin{thebibliography}{10}

\bibitem{AgarwalS98}
Pankaj~K. Agarwal and Micha Sharir.
\newblock {\em Davenport-Schinzel Sequences and Their Geometric Applications,
  Cambridge University Press}.
\newblock 1998.

\bibitem{AronovES10}
Boris Aronov, Esther Ezra, and Micha Sharir.
\newblock Small-size epsilon-nets for axis-parallel rectangles and boxes.
\newblock {\em SIAM J. Comput.}, 39(7):3248--3282, 2010.

\bibitem{BalisterZKS09}
Paul Balister, Zizhan Zheng, Santosh Kumar, and Prasun Sinha.
\newblock Trap coverage: Allowing coverage holes of bounded diameter in
  wireless sensor networks.
\newblock In {\em In Proc. of IEEE INFOCOM, Rio de Janeiro}, 2009.

\bibitem{bg95}
Herv{\'e} Br{\"o}nnimann and Michael~T. Goodrich.
\newblock Almost optimal set covers in finite vc-dimension.
\newblock {\em Discrete {\&} Computational Geometry}, 14(4):463--479, 1995.

\bibitem{ChalermsookC09}
Parinya Chalermsook and Julia Chuzhoy.
\newblock Maximum independent set of rectangles.
\newblock In {\em Proceedings of the twentieth Annual ACM-SIAM Symposium on
  Discrete Algorithms}, SODA '09, pages 892--901, Philadelphia, PA, USA, 2009.
  Society for Industrial and Applied Mathematics.

\bibitem{ChanH09}
Timothy~M. Chan and Sariel Har-Peled.
\newblock Approximation algorithms for maximum independent set of pseudo-disks.
\newblock In {\em Symposium on Computational Geometry}, pages 333--340, 2009.

\bibitem{ClarksonV05}
Kenneth~L. Clarkson and Kasturi Varadarajan.
\newblock Improved approximation algorithms for geometric set cover.
\newblock In {\em Proceedings of the twenty-first annual symposium on
  Computational geometry}, SCG '05, pages 135--141, New York, NY, USA, 2005.
  ACM.

\bibitem{EricksonH04}
Jeff Erickson and Sariel Har-Peled.
\newblock Optimally cutting a surface into a disk.
\newblock {\em Discrete {\&} Computational Geometry}, 31(1):37--59, 2004.

\bibitem{FP}
Jacob Fox and Janos Pach.
\newblock Computing the independence number of intersection graphs.
\newblock In {\em Proceedings of the ACM-SIAM Symposium on Discrete
  Algorithms}, SODA '11, 2011.

\bibitem{GargVY04}
Naveen Garg, Vijay~V. Vazirani, and Mihalis Yannakakis.
\newblock Multiway cuts in node weighted graphs.
\newblock {\em Journal of Algorithms}, 50(1):49 -- 61, 2004.

\bibitem{KumarLA05}
Santosh Kumar, Ten~H. Lai, and Anish Arora.
\newblock Barrier coverage with wireless sensors.
\newblock In {\em MobiCom '05: Proceedings of the 11th annual international
  conference on Mobile computing and networking}, pages 284--298, New York, NY,
  USA, 2005. ACM.

\bibitem{MustafaR09}
Nabil~H. Mustafa and Saurabh Ray.
\newblock Ptas for geometric hitting set problems via local search.
\newblock In {\em Symposium on Computational Geometry}, pages 17--22, 2009.

\bibitem{Reif83}
John Reif.
\newblock Minimum s-t cut of a planar undirected network in $o(n \log^2 n)$
  time.
\newblock {\em SIAM Journal on COmputing}, 12:71--81, 1983.

\bibitem{SankararamanERT09}
Swaminathan Sankararaman, Alon Efrat, Srinivasan Ramasubramanian, and Javad
  Taheri.
\newblock Scheduling sensors for guaranteed sparse coverage.
\newblock {\em CoRR}, abs/0911.4332, 2009.

\end{thebibliography}

\appendix
\end{document}